\title{Bell Inequality Violation with Free Choice and Local Causality on the Invariant Set}
\author{T.N.Palmer\\ Department of Physics, University of Oxford, UK\\
tim.palmer@physics.ox.ac.uk}
\date{\today}                                          % Activate to display a given date or no date
\newcommand\be{\@ifstar{\[}{\begin{equation}}}
\newcommand\ee{\@ifstar{\]}{\end{equation}}}
\newcommand\bp{\begin{pmatrix}}
\newcommand\ep{\end{pmatrix}}
\newtheorem*{theorem}{Theorem}
\newtheorem*{definition}{Definition}
\begin{document}
\bibliographystyle{plain}
\maketitle
\begin{abstract}
Bell's Theorem requires any theory which obeys the technical definitions of Free Choice and Local Causality to satisfy the Bell inequality. Invariant set theory is a finite theory of quantum physics which violates the Bell inequality exactly as does quantum theory: in it neither Free Choice nor Local Causality hold, consistent with Bell's Theorem. However, within the proposed theory, the mathematical expressions of both Free Choice and Local Causality involve states which, for number-theoretic reasons, cannot be ontic (cannot lie on the theory's fractal-like invariant set $I_U$ in state space). Weaker versions of Free Choice and Local Causality are proposed involving only the theory's ontic states. Conventional hidden-variable theories satisfying only these weaker definitions still obey the Bell inequality. However, invariant set theory, which violates the Bell inequality, satisfies these weaker definitions. It is argued that the weaker definitions are consistent with the physical meaning behind free choice and local causality as defined in space-time, and hence that Free Choice and Local Causality are physically too strong. It is concluded that the experimental violation of the Bell inequality may have less to do with free choice or local causality \emph{per se}, and more to do with the presence of a holistic but causal state-space geometry onto which quantum ontic states are constrained. \end{abstract}
%\newpage
%\tableofcontents
%\newpage

\section{Introduction}
Bell's Theorem \cite{Bell} requires that any theory in which Free Choice and Local Causality hold must necessarily satisfy the CHSH version of the Bell inequality (and therefore be inconsistent with experiment). As a mathematical theorem, this result is clearly dependent on the precise mathematical definition of these two conditions, hence the reason for using capitalised words above - in lower case form they will merely refer to the more qualitative physical concepts underpinning such expressions. The purpose of this paper is to examine these definitions carefully in the context of a finite theory of qubit physics (`invariant set theory' or IST, introduced in Sections \ref{bell} and \ref{IST}) which violates the CHSH inequality exactly as does quantum theory. In this theory, the laws of physics derive from a primal fractal-like geometry $I_U$ in state space. It is shown explicitly that this theory does not obey Free Choice and Local Causality as it must do to be consistent with Bell's Theorem. However, as discussed in Section \ref{bellist} the mathematical expressions for Free Choice and Local Causality in IST necessarily involve states which (because of number theoretic properties of the cosine function) cannot lie on $I_U$ and therefore cannot be ontic. As such it is concluded that Free Choice and Local Causality are overly strong constraints for expressing the underpinning physical notions of free choice and local causality. In Section \ref{free}, Free Choice and Local Causality are weakened so that by construction they only refer to ontic states on $I_U$. These weakened definitions are defined as `Free Choice on the Invariant Set' and 'Local Causality on the Invariant Set'.  These weakened definitions have no impact on conventional hidden-variable theories, which continue to satisfy the Bell inequality. However, the ontic states of IST do satisfy the weakened definitions despite violating the Bell inequality as does quantum theory. In Section \ref{objections}, it is shown that invariant set theory is not of the `superdeterministic' type \cite{WisemanCavalcanti} involving implausible conspiracies between the implicit hidden variables of the theory and determinants of experimental settings. Neither is the theory fine tuned \cite{WoodSpekkens}. 

A principal conclusion of this paper is that the notions of free choice and local causality may ultimately not be central to a deep understanding of why quantum physics violates the Bell inequality and is therefore so different from classical physics. It is concluded it is the geometric structure of (cosmological) state space that instead lies at the heart of this search. As discussed in Section \ref{discussion}, a key motivation for this work is to find a formulation of quantum physics which is compatible with the nonlinear deterministic causal geometric structure of general relativity. As such, understanding the geometry of state space may also be the key to synthesising quantum and gravitational physics.   

\section{Bell's Theorem}
\label{bell}
Although Bell's Theorem can be formulated for inherently indeterministic theories of physics (e.g. \cite{Brunner}), in this section we outline the derivation of Bell's Theorem \cite{Bell} and formulate the assumptions of Free Choice and Local Causality in relation to two putative deterministic theories of physics: a conventional hidden-variable theory and IST. 

Two systems which have been produced by a common source are spatially separated and each measured by one of two distant experimenters: Alice and Bob. These experimenters can each choose one of two measurement settings, denoted by $X \in \{0,1\}$ and $Y \in \{0,1\}$ respectively. Once performed, the measurements yield outcomes $A \in \{0,1\}$ and $B\in \{0,1\}$ respectively. From one realisation of the experiment to another, the outcomes $A$ and $B$ vary, even for the same choices of measurement setting $X$ and $Y$. 

In both our conventional hidden-variable theory and IST we assume the existence of some supplementary variable $\lambda$ so that $A=A_{XY}(\lambda)$, $B=B_{YX}(\lambda)$ are deterministic formulae. Hence, with $\Lambda$ a finite sample space of supplementary variables (IST is an explicitly finite theory of quantum physics),
\begin{equation}
\label{prob}
E(AB|XY)= \sum_{\lambda \in \Lambda} A_{XY}(\lambda) B_{YX}(\lambda)\;  p(\lambda | XY) 
\end{equation}
denotes an expectation value for the product $AB$, and where $p(\lambda|XY)$ denotes a probability function on $\lambda \in \Lambda$. Based on these we can now define:
\begin{itemize}
\item \textbf{Free Choice}: $p(\lambda | XY)=p(\lambda)$
\item \textbf{Local Causality}: $A_{XY}(\lambda)=A_X(\lambda)$, $B_{YX}(\lambda) = B_Y(\lambda)$
\end{itemize}
Now consider the quantity (left-hand side of CHSH inequality):
\begin{equation}
\label{S}
S= E(AB|00)+E(AB|01)+E(AB|10)-E(AB|11)
\end{equation}
For our conventional hidden-variable theory which we assume satisfies Free Choice and Local Causality
\begin{equation}
S=\sum_{\lambda \in \Lambda} S(\lambda) \; p(\lambda)
\end{equation}
where
\begin{align}
S(\lambda)&=A_0(\lambda) B_0(\lambda)+A_0(\lambda) B_1(\lambda) +A_1(\lambda) B_0(\lambda)-A_1(\lambda) B_1(\lambda) \nonumber \\
&=A_0(\lambda)(B_0(\lambda)+B_1(\lambda))+A_1(\lambda)(B_0(\lambda)-B_1(\lambda)) \nonumber \\
&\le (B_0(\lambda)+B_1(\lambda)+|B_0(\lambda)-B_1(\lambda)| \le 2
\end{align}
Hence 
\begin{align}
\label{Sdet0}
S=\; &\sum_{\lambda \in \Lambda} A_0(\lambda) B_0(\lambda) p(\lambda)+
\sum_{\lambda \in \Lambda} A_0(\lambda) B_1(\lambda)p(\lambda)
\nonumber \\
+&\sum_{\lambda \in \Lambda} A_1(\lambda) B_0(\lambda)p(\lambda)-
\sum_{\lambda \in \Lambda} A_1(\lambda) B_1(\lambda)p(\lambda) \le 2
\end{align}
This is Bell's Theorem. 

Consider now a Bell experiment (e.g. \cite{Shalm}) where four separate sub-ensembles of particle pairs (each associated with some $\lambda$) are measured with the four selections  $XY=00, 01, 10, 11$ respectively. Let $\Lambda_{XY}$ denote the set of $\lambda$ values which occur for the specific experimental settings $X$, $Y$. Then, by Free Choice, our conventional hidden-variable theory predicts (since each $\Lambda_{XY}$ is individually a representation of $\Lambda$):
\begin{align}
\label{Sdet1}
S=\; &\sum_{\lambda \in \Lambda_{00}} A_0(\lambda) B_0(\lambda) p(\lambda)+
\sum_{\lambda \in \Lambda_{01}} A_0(\lambda) B_1(\lambda)p(\lambda)
\nonumber \\
+&\sum_{\lambda \in \Lambda_{10}} A_1(\lambda) B_0(\lambda)p(\lambda)-
\sum_{\lambda \in \Lambda_{11}} A_1(\lambda) B_1(\lambda)p(\lambda)
\end{align}
Experimentally $S$ is found to exceed $2$. Hence quantum physics cannot be described by a conventional hidden-variable theory which satisfies Free Choice and Local Causality. 

The second deterministic theory considered in this paper is IST (see Section \ref{IST}). To describe a key property of IST, we introduce the notation that $X^\dagger  $, $Y^\dagger $ denote the complementary values of $X$ and $Y$ respectively (e.g. if $X=0$, $X^\dagger  =1$ and so on). Then, for $\lambda \in \Lambda_{XY}$, IST has the emergent properties:
\begin{itemize}
\item If $A_{XY}(\lambda) \in \{0,1\}$ then $A_{X^\dagger  Y^\dagger }(\lambda) \in \{0,1\}$, but $A_{XY^\dagger }(\lambda)$ and  $A_{X^\dagger   Y}(\lambda)$  are undefined. 
\item If $B_{YX}(\lambda) \in \{0,1\}$ then $B_{Y^\dagger X^\dagger  }(\lambda) \in \{0,1\}$, but $B_{YX^\dagger  }(\lambda)$ and  $B_{Y^\dagger  X}(\lambda)$  are undefined. 
\label{ISTbullet1}
\end{itemize}
Consistent with this:
\begin{itemize}
\item $p(\lambda |XY)=p(\lambda | X^\dagger   Y^\dagger )=p(\lambda)$
\item $p(\lambda |X^\dagger  Y)=p(\lambda | X Y^\dagger )=0$
\label{ISTbullet2}
\end{itemize}
From the first set of bullet points, $A_{XY}(\lambda) \ne A_{X Y^\dagger }(\lambda)$. Hence Local Causality is manifestly false. Similarly, from the second set of bullet points $p(\lambda|XY) \ne p(\lambda| X Y^\dagger )$. Hence, Free Choice is also manifestly false. It is worth noting that the reasons for the failure of Free Choice and Local Causality are \emph{both} related to the undefinedness of $A$ and $B$ under certain conditions. This suggests something other than the failure of free choice and local causality may lie at the heart of the violation of the Bell inequality, since these are rather different physical concepts. 

In IST, 
\begin{align}
\label{Sdet2}
S = \;&\sum_{\lambda \in \Lambda_{00}} A_{00}(\lambda) B_{00}(\lambda)p(\lambda)
+\;\sum_{\lambda \in \Lambda_{01}}A_{01}(\lambda) B_{10}(\lambda)p(\lambda)
\nonumber \\
+&\sum_{\lambda \in \Lambda_{10}}A_{10}(\lambda) B_{01}(\lambda)p(\lambda)
-\sum_{\lambda \in \Lambda_{11}}A_{11}(\lambda) B_{11}(\lambda)p(\lambda)
\end{align}
Since IST satisfies neither Free Choice nor Local Causality, it is not constrained to satisfy $S \le 2$, consistent with the Bell Theorem. However, in Section \ref{free} we will weaken the definitions of Free Choice and Local Causality to `Free Choice on the Invariant Set' and `Local Causality on the Invariant Set'. What is meant by the `Invariant Set' is described in the next section. For now, it can be considered a subset of state space on which values of $A$ and $B$ are always well defined. As discussed in Section \ref{free} we will assert that despite being weaker,  `Free Choice on the Invariant Set' and `Local Causality on the Invariant Set' capture the essential space-time physics behind the notions of free choice and local causality. With these weakened definitions, the conventional hidden-variable theory continues to satisfy the Bell inequality as before, since by construction $A$ and $B$ are well defined on the whole of state space. Hence weakening Free Choice and Local Causality in this way does not change the conclusion that a conventional hidden-variable theory is inconsistent with experiment. However, this weakening does have a substantial impact on IST. For example, 'Local Causality on the Invariant Set' implies (see Section \ref{free}) that $A_{XY}(\lambda)=A_X(\lambda)$ and $B_{XY}(\lambda)=B_Y(\lambda)$ so that (\ref{Sdet2}) can be written
\begin{align}
\label{Sdet3}
S = \;&\sum_{\lambda \in \Lambda_{00}} A_{0}(\lambda) B_{0}(\lambda)p(\lambda)
+\;\sum_{\lambda \in \Lambda_{01}}A_{0}(\lambda) B_{1}(\lambda)p(\lambda)
\nonumber \\
+&\sum_{\lambda \in \Lambda_{10}}A_{1}(\lambda) B_{0}(\lambda)p(\lambda)
-\sum_{\lambda \in \Lambda_{11}}A_{1}(\lambda) B_{1}(\lambda)p(\lambda)
\end{align}
Now, (\ref{Sdet3}) is in an identical form to (\ref{Sdet1}). However, the key difference between (\ref{Sdet1}) and (\ref{Sdet3}) is that in the former case the four subsets $\Lambda_{XY}$ are statistically equivalent to one another (for any of the four correlations, the sum over $\Lambda_{XY}$ could be replaced by a sum over, say, $\Lambda_{X Y^\dagger }$). However, in IST, the four subsets $\Lambda_{XY}$ are not statistically equivalent (a sum over $\Lambda_{XY}$ certainly cannot be replaced by a sum over $\Lambda_{X Y^\dagger })$. Because of this we cannot write (\ref{Sdet3}) in the form $S= \sum_\lambda S(\lambda) p(\lambda)$ and so the Bell inequality cannot be established. As discussed below, it is claimed that the key reason for this difference arises from the fact that in a conventional hidden-variable theory, the set of well-defined and hence ontic states is the whole of (Euclidean) state space, whilst in IST the set of well-defined and hence ontic states is a nontrivial sub-set of state space. 
That is to say, a key conclusion of this paper is that it is state space geometry, rather than causality and free choice \emph{per se}, that lies at the heart of why quantum physics is so different from classical physics. 

\section{Invariant Set Theory}
\label{IST}
\begin{quote}
'The infinite is nowhere to be found in reality, no matter what experiences, observations, and knowledge are appealed to.' \cite{Hilbert}
\end{quote}

Following Hardy \cite{Hardy:2004} we can think of a typical problem in quantum physics as one where a physical system is prepared in some initial state, is then subject to various probability preserving unitary transformations, and is finally measured in the transformed state. In quantum theory, these transformations are, of course, described by the Schr\"{o}dinger equation. Motivated in part by Hilbert's observation above (reinforced more recently by \cite{Ellis:2018}), IST is an explicitly finite theory of quantum physics \cite{Palmer:2018a}. As in quantum theory, quantum systems in IST can be described by complex Hilbert vectors and tensor products. However, in addition these states must satisfy two finiteness criteria. In particular, the $n$-qubit Hilbert state,
\begin{equation}
\label{hilbert}
|\psi\rangle=\alpha_{00\ldots 0}|00 \ldots 0\rangle+\alpha_{00\ldots 1}|00 \ldots 1\rangle+\ldots +\alpha_{11\ldots 1}|11 \ldots 1\rangle
\end{equation}
written in either the preparation or measurement basis, is also a Hilbert state in IST providing
\begin{align}
\label{finite}
|\alpha_{ij\ldots k}|^2 &=\frac{n_1}{N}\nonumber \\
\frac{\arg{\alpha_{ij\ldots k}}}{2\pi}&=\frac{n_2}{N}
\end{align}
where $i,j,\ldots, k \in \{0,1\}$. Here $N\gg1$ is some large but finite whole number. As suggest in \cite{Palmer:2018a}, the largeness of $N$ may be related to the relative weakness of gravity, though for the purposes of this paper, $N$ is simply treated as an arbitrarily large integer, and $n_1, n_2$ are integers such that $0\le n_1, n_2 \le N$.  Importantly (see below), $N$ is larger than the maximum number $M_1$ of bits with which an experimenter can determine measurement orientations using finite precision measuring apparatuses, or the maximal size $M_2$ of any computational machine potentially available to the experimenters. Below we refer to Hilbert states which satisfy the finiteness conditions (\ref{finite}) as `finite Hilbert states'. A simple example is the single qubit
\be
\label{qubit}
\cos\frac{\theta}{2} |0\rangle + \sin\frac{\theta}{2} e^{i \phi} |1\rangle
\ee
where the finiteness conditions (\ref{finite}) demand that both $\cos \theta$ and $\phi/2\pi$ are rational numbers.  Using Niven's theorem concerning number-theoretic properties of the cosine function (see below) this provides the basis of quantum complementarity (and as discussed below, Bell `nonlocality') in IST \cite{Palmer:2018a}. In IST the transformations between preparation and measurement are effected by a finite approximation to the Schr\"{o}dinger equation (i.e. based on the finite calculus) which guarantees the transformed final states satisfy the finiteness criteria (\ref{finite}). 

The set of Hilbert states satisfying (\ref{finite}) is manifestly not closed under addition (see Appendix B). However, in IST, algebraic closure is reinstated at deeper deterministic level (by a simple argument described in Appendix A, any finite theory of physics cannot be indeterministic). As discussed in \cite{Palmer:2018a}, Hilbert states satisfying (\ref{finite}) are statistical representations of a fractal geometry $I_U$ of state-space trajectories (or `histories') in cosmological state space. In particular, for each degree of freedom, a trajectory at one fractal iterate comprises a helix of $N$ trajectories at the next highest fractal iterate (see Fig \ref{helix} c.f. the way rope is constructed). This fractal geometry $I_U$ is referred to as the `invariant set' because if a state lies on $I_U$ at some particular time, under dynamical evolution it will always lie on it. Conversely if a state does not lie on $I_U$ at a particular time, then under dynamical evolution it will never lie on $I_U$. In this sense $I_U$ can be compared with the fractal attractors of classical chaotic systems. However, unlike such classical systems, $I_U$ is considered a primal geometric entity from which the dynamical laws of evolution derive. In particular, unlike classical theory, the ontic states of IST are the states which lie on $I_U$ and only them. The details of $I_U$ are unimportant for the purposes of this paper. All that is relevant is this:
\begin{itemize}
\item IST is a finite (and hence deterministic) theory, in which state space is partitioned into a subset $I_U$ of ontic states and a complement of undefined non-ontic states. 
\item If and only if a Hilbert State (\ref{hilbert}) satisfies the finiteness condition (\ref{finite}) does it have a statistical, i.e. probabilistic, representation on a finite sample space of ontic states of IST. 
\end{itemize}
 
\begin{figure}
\centering
\includegraphics[scale=0.2]{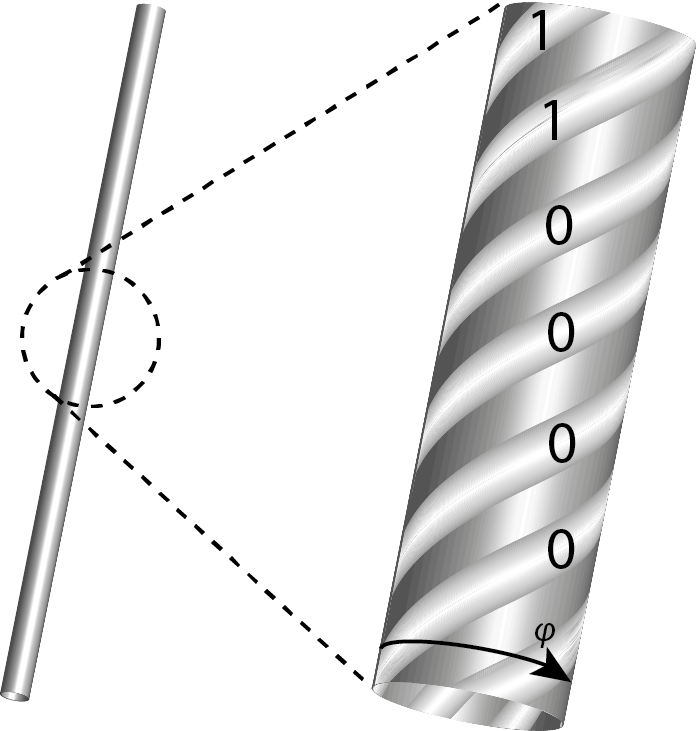}
\caption{\emph{A state-space trajectory segment, which appears to be a simple line on some coarse scale, is in fact found to be, on magnification, a helix of $N$ trajectory segments. On further magnification (like a section of rope), each of these helical trajectory segments is itself a helix of $N$ trajectories, and so on. A cross section through the original coarse-scale trajectory segment has a Cantor set structure based on $N$ nested disks. At any particular level of magnification (i.e. fractal iterate) the trajectory segments can be labelled $0$ or $1$ according to the distinct regions of state space to which they evolve under decoherence, i.e. under interaction with the environment, and possibly cluster under the attractive effects of gravity \cite{Penrose:2004} \cite{Diosi:1989}. A statistical representation of this fractal helix is given by the finite Hilbert vector (\ref{qubit}) where $\phi$ denotes a rotation of the helix and $\cos^2 \theta/2$ determines the fraction of helical trajectory segments labelled $0$.}}
\label{helix}
\end{figure}

In IST, finite Hilbert states always have a probabilistic interpretation in terms of the geometry of $I_U$ (Appendix B shows that superpositions of finite Hilbert states are not themselves finite and hence not ontic). In particular the /preparation/measurement eigenstates $|00\ldots0\rangle$, $|00\ldots1\rangle$ etc refer to distinct regions of state space to which individual state space trajectories evolve (and indeed cluster) through the process of decoherence, i.e. environmental coupling. As such, individual trajectories can be labelled by the cluster to which they evolve. Like the riddled basins of attraction in nonlinear dynamical system theory \cite{Ott}, neighbouring trajectories of the helix can be labelled according to the cluster into which they evolve. Because the finite Hilbert states always have a probabilistic interpretation, there is no state-vector collapse and hence no `measurement problem' in IST. See \cite{Palmer:2018a} for more details. 

\section{Bell's Theorem and Invariant Set Theory}
\label{bellist}
We start this section by reviewing the quantum theoretic description of a Bell experiment. In particular, each particle pair is in the spherically symmetric singlet state  
\begin{equation}
|\psi\rangle=\frac{1}{\sqrt 2}(|\hat{\bm{n}}, 0 \rangle_1 |\hat{\bm{n}}, 1 \rangle_2 - |\hat{\bm{n}}, 1 \rangle_1 |\hat{\bm{n}}, 0 \rangle_2 )
\end{equation}
for unit vectors $\hat{\bm{n}}$ in physical 3-space. Suppose spin measurements are performed on particles 1 and 2 in the directions $\hat{\mathbf{a}}$, $\hat{\mathbf{b}}$, respectively, so that in the measurement basis, the expectation value 
\begin{align}
\label{pauli}
E(\hat{\mathbf{a}},\hat{\mathbf{b}})& =\langle \psi | (\bm{\sigma}.\hat{\mathbf{a}})(\bm \sigma.\hat{\mathbf{b}}) | \psi \rangle
\end{align} 
where $\bm \sigma$ are Pauli matrices. 

Now the settings $X=0,1$ and $Y=0,1$ correspond to measurement orientations and hence to points on the unit celestial sphere. Consider in particular the three points $X=0$, $X=1$ and $Y=0$ represented by the vertices of a spherical triangle $\triangle$ in Fig \ref{F:CHSH}. Let $\hat {\mathbf x}$, $\hat {\mathbf y}$ and $\hat {\mathbf z}$ represent unit vectors relative to Cartesian coordinates $Oxyz$.  Let us orient the coordinates so that $X=0$ corresponds to the unit vector $\hat {\mathbf a}_{X=0}=\hat {\mathbf z}$ and $X=1$ is the unit vector $ \hat{\mathbf a}_{X=1}=\sin \theta_{X=0 X=1} \hat {\mathbf y}+\cos \theta_{X=0 X=1} \hat {\mathbf z}$, where $\theta_{X=0 X=1}$ denotes the angular distance on the celestial sphere between $X=0$ and $X=1$ (with obvious generalisation below). Using this, $Y=0$ corresponds to the unit vector
$$
\hat{\mathbf b}_{Y=0}=\sin\theta_{X=0Y=0} \sin \gamma\; \hat {\mathbf x}+\sin\theta_{X=0Y=0} \cos \gamma \; \hat {\mathbf y}+\cos\theta_{X=0Y=0} \; \hat {\mathbf z}
$$
where $\gamma$ is the internal angle of the spherical triangle $\triangle$ at the vertex $X=0$ (see Fig \ref{F:CHSH}). Based on this, the three spin operators for measurements along directions corresponding to the vertices of the triangle are:
\begin{align}
\bm{\sigma}.\hat{\mathbf{a}}_{X=0}=&
\begin{pmatrix}
1 & 0 \\
0 & -1
\end{pmatrix}
\nonumber \\
\bm{\sigma}.\hat{\mathbf{a}}_{X=1}=&
\begin{pmatrix}
\cos \theta_{X=0 X=1} & - \sin \theta_{X=0 X=1} \\
\sin \theta_{X=0 X=1} & \cos \theta_{X=0 X=1}
\end{pmatrix}
\nonumber \\
\bm{\sigma}.\hat{\mathbf{b}}_{Y=0}=&
\begin{pmatrix}
\cos \theta_{X=0 Y=0} & - e^{-i \gamma} \sin \theta_{X=0 X=1} \\
e^{i \gamma} \sin \theta_{X=0 Y=0}  & \cos \theta_{X=0 Y=0}
\end{pmatrix}
\end{align}
with eigenvectors
\begin{align}
\begin{pmatrix}
1  \\
0 
\end{pmatrix}&\  \text{and}\ 
\begin{pmatrix}
0  \\
1 
\end{pmatrix}
\nonumber \\
\begin{pmatrix}
\cos \frac{\theta_{X=0 X=1}}{2} \\
\sin \frac{\theta_{X=0 X=1}}{2}
\end{pmatrix}& \  \text{and}\  
\begin{pmatrix}
\sin \frac{\theta_{X=0 X=1}}{2} \\
\cos \frac{\theta_{X=0 X=1}}{2}
\end{pmatrix}
\nonumber \\
\begin{pmatrix}
e^{-i \gamma} \cos \frac{\theta_{X=0 Y=0}}{2} \\
\sin \frac{\theta_{X=0 Y=0}}{2}
\end{pmatrix}& \  \text{and}\  
\begin{pmatrix}
e^{-i \gamma} \sin \frac{\theta_{X=0 Y=0}}{2}  \\
\cos \frac{\theta_{X=0 Y=0}}{2}
\end{pmatrix}
\end{align}
Now in quantum theory, the singlet state, transformed to the measurement basis, varies continuously as $\theta_{XY}$ and $\gamma$ range over the reals. This property of continuity is fundamental to quantum theory, as highlighted in Hardy's axiomatic approach to quantum theory \cite{Hardy:2004}. However, this property breaks down in an important way in IST. 

\begin{figure}
\centering
\includegraphics[scale=0.3]{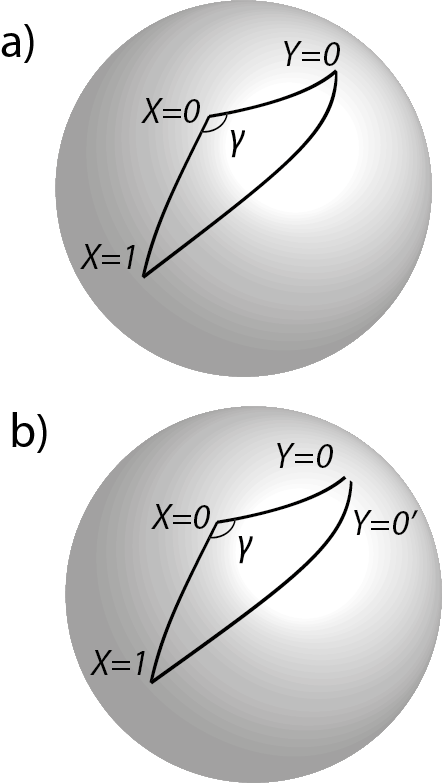}
\caption{\emph{a) In general it is impossible for all the cosines of the angular lengths of all three sides of the spherical triangle $\triangle$ to be rational, and the internal angles to be rational multiples of $2\pi$. That is, the finiteness condition (\ref{finite}) cannot be satisfied for a counterfactual measurement $X=1$, $Y=0$, when it is satisfied for a realisable measurement $X=0$, $Y=0$ on $I_U$. Hence $p(\lambda | 10)=0$ when $p(\lambda | 00) \ne 0$ which implies invariant set theory must violate Free Choice, consistent with the fact that the theory does not satisfy the Bell inequality.  b) When a measurement $X=1$, $Y=0$ is made on $I_U$ (on a different particle pair), the cosine of the angular length between $X=1$ and $Y=0$ must now be rational. Hence, the precise orientation associated with the first $Y=0$ measurement cannot not be the same as the precise orientation for the second $Y=0$ measurement (represented in the figure as $Y'=0$). In a precise sense, a) is a singular limit of b) as the angle between $Y=0$ and $Y'=0$ is set to zero \cite{Palmer:2018c}. Since gravitational waves represent an irreducible source of noise, these considerations suggest a deep synergy between IST and gravitation theory.}}
\label{F:CHSH}
\end{figure}

To see this, note that the finiteness conditions (\ref{finite}) applied to the (measurement basis) eigenvectors imply that $\cos \theta_{X=0 Y=0}$, $\cos \theta_{X=0 X=1}$ and $\gamma/2 \pi$ must be rational numbers. This is a requirement for the Hilbert state to represent, in probabilistic form, the outcome of two types of experiment on $I_U$ to be well defined: a) Alice measured her particle with the $X=0$ setting and Bob measured the spin of his particle with the $Y=0$ setting and b) Alice measured the spin of her particle with the $X=0$ setting and with the spin prepared in this way, performed a second spin measurement by sending the same particle through the measuring apparatus with $X=1$ selected. 

However, suppose we ask the counterfactual question: given the possibility of measurements a) and b), is it possible that Alice and Bob could have measured the spin of their particles with respect to $X=1$ and $Y=0$ respectively. In quantum theory the answer is yes and (by Hardy's Continuity Axiom \cite{Hardy:2004}) quantum theory provides a probabilistic answer. Here we can transform the Cartesian axes so that the unit vector $\hat{\mathbf z}$ coincides with $X=1$ and the direction associated with $Y=0$ lies on the $(\hat{\mathbf y}, \hat{\mathbf z})$ plane. For this to have an ontic representation in IST, then this implies that $\cos \theta_{X=1 Y=0}$ must also be rational. Is it possible for $\cos \theta_{X=1 Y=0}$ to be rational, given that $\cos \theta_{X=0 Y=0}$, $\cos \theta_{X=0 X=1}$ and $\gamma/2\pi$ are rational? 

We now introduce Niven's theorem, essential for the analysis that follows. A simple proof is given in Appendix C. 
\begin{theorem}
\label{theorem2}
 Let $\phi/\pi \in \mathbb{Q}$. Then $\cos \phi \notin \mathbb{Q}$ except when $\cos \phi =0, \pm \frac{1}{2}, \pm 1$. \cite{Niven, Jahnel:2005}
\end{theorem}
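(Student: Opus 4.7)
The plan is to reduce Niven's theorem to the classical fact that a rational algebraic integer must be an ordinary integer. Concretely, I would show that $2\cos\phi$ is an algebraic integer whenever $\phi/\pi\in\mathbb{Q}$; combined with the trivial bound $|2\cos\phi|\le 2$, this forces $2\cos\phi\in\{-2,-1,0,1,2\}$ in the rational case, which is exactly the stated conclusion $\cos\phi\in\{0,\pm 1/2,\pm 1\}$.

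First I would introduce the sequence $a_n=2\cos(n\phi)$ and derive the three-term recurrence
\[
a_{n+1}=a_1\,a_n-a_{n-1},\qquad a_0=2,\ a_1=2\cos\phi,
\]
which follows immediately from the identity $2\cos((n+1)\phi)+2\cos((n-1)\phi)=2\cos\phi\cdot 2\cos(n\phi)$. A straightforward induction then yields monic polynomials $P_n\in\mathbb{Z}[x]$ with $P_n(a_1)=a_n$ (these are renormalised Chebyshev polynomials of the first kind), since $P_0=2$, $P_1=x$ and $P_{n+1}=xP_n-P_{n-1}$ preserves both monicity and integrality of coefficients. If $\phi=(p/q)\pi$ with $p,q\in\mathbb{Z}$, setting $n=2q$ gives $a_{2q}=2\cos(2p\pi)=2$, so $a_1$ is a root of the monic integer polynomial $P_{2q}(x)-2$; hence $2\cos\phi$ is an algebraic integer.

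To finish, I would invoke the rational root theorem: any rational root of a monic polynomial with integer coefficients is itself an integer. Hence if $\cos\phi$ is rational, then $2\cos\phi$ is a rational algebraic integer, so an integer in $[-2,2]$, giving the five listed values. The only step that needs genuine care is the induction that each $P_n$ is simultaneously monic and integer-coefficiented, but this is transparent from the recurrence; there is no substantive obstacle, which is precisely why the paper can promise a short proof — all the real content is concentrated in the algebraic-integer/rational-root lemma, which itself admits a half-line proof via clearing denominators in a hypothetical rational root $m/n$ in lowest terms.
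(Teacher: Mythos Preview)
Your proof is correct, but it takes a genuinely different route from the paper's. You build the Chebyshev-type recurrence $P_{n+1}=xP_n-P_{n-1}$ to exhibit $2\cos\phi$ as a root of a monic integer polynomial, and then invoke the rational-root/algebraic-integer lemma. The paper instead iterates the double-angle identity $2\cos 2\phi=(2\cos\phi)^2-2$: writing $2\cos\phi=a/b$ in lowest terms, it checks that $a^2-2b^2$ and $b^2$ are again coprime, so the denominator of $2\cos(2^k\phi)$ is $b^{2^k}$ and grows without bound unless $b=\pm1$; this contradicts the fact that for $\phi/\pi=m/n$ the sequence $(2\cos 2^k\phi)_k$ takes only finitely many values. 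Your approach is the cleaner number-theoretic one and generalises immediately (e.g.\ to show $2\cos(\pi/n)$ is always an algebraic integer, or to treat $\sin$ and $\tan$ uniformly); the paper's argument is more self-contained in that it never names algebraic integers or the rational root theorem, needing only a coprimality check and a pigeonhole/periodicity observation. One minor quibble: your $P_0=2$ is not monic, so the induction for monicity should really start at $n=1$ (with $P_1=x$, $P_2=x^2-2$); since you only need $P_{2q}$ with $q\ge1$ to be monic, this does not affect the argument.
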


Using Niven's theorem we can decide on the rationality of $\cos \theta_{X=1 Y=0}$ by using the cosine rule for spherical triangles,
\be 
\label{cosinerule}
\cos \theta_{X=1 Y=0}=\cos \theta_{X=0Y=0} \cos \theta_{X=0X=1} + \sin \theta_{X=0Y=0} \sin \theta_{X=0X=1} \cos \gamma
\ee
applied to $\triangle$. Now by Niven's theorem $\cos \theta_{X=1 Y=0}$ would be rational if the three points $X=0,1$ and $Y=0$ lay on a great circle \emph{exactly}, so that $\gamma=180^\circ$ precisely. Now we can certainly assume that, by experimental design, the three points \emph{nominally} lie on a single great circle corresponding to coplanar orientations of the measuring devices. However, given a sample space of $K$ equally likely angles in a small neighbourhood of $180^\circ$, the likelihood of choosing $180^\circ$ precisely scales as $1/K$ and since $K$ scales as $N$ then the likelihood becomes arbitrarily small for large enough $N$. As soon as $\gamma$ deviates from $\pi$ by the smallest amount conceivable, its cosine will not be rational. Moreover, since Alice and Bob's measuring apparatus' have finite precision they will be unable to contrive for the three orientations to be coplanar \emph{precisely}. Hence, in general, $\cos \theta_{X=1 Y=0}$ is the sum of two terms, the first a rational and the second the product of three independent terms, the last of which is irrational. Being independent, these three terms cannot conspire to make their product rational. Hence $\cos \theta_{X=1 Y=0}$ is the sum of a rational and an irrational and must therefore be irrational. 

More generally, this result shows that if Alice and Bob make the choices $X$ and $Y$ for a given particle pair characterised by $\lambda$, then the counterfactuals where they instead made the choices $X$,  $Y^\dagger $ or $X^\dagger  $, $Y$ are undefined for that particular particle pair. As such, the states associated with the latter counterfactuals cannot be described by complex Hilbert states that satisfy the finiteness condition (\ref{finite}) and therefore do not correspond to ontic states on the invariant set $I_U$. (There is nothing in the argument above, however, to say that Alice and Bob could not have chosen $X^\dagger  $, $Y^\dagger $, having actually chosen $X$ and $Y$). The metaphysical implications of these conclusions are deferred to Section \ref{free}. The crucial role of counterfactuals in the Bell theorem has been highlighted elsewhere \cite{tHooft:2015b}.

The analysis of this section justifies the IST bullet summary in Section \ref{bell} and hence the fact that IST satisfies neither Free Choice nor Local Causality. However, demonstrating this failure has required us to consider explicitly states which are not ontic. Similar considerations apply to Local Causality. If $A_{00}(\lambda)$ has a definite value, either 0 or 1, then by construction $A_{01}(\lambda)$ is undefined. Hence $A$ cannot be of the form $A_X(\lambda)$. Identical arguments apply to $B$. From a physical point of view this requirement to consider non-ontic states explicitly is unsatisfactory - the notions of free choice and local causality to refer to properties of the real world and should not be dependent on putative unphysical worlds. This suggests a reformulation Free Choice and Local Causality is needed, so that they only refer explicitly to ontic states. This is done in the next Section. 

To conclude this section, suppose that after the first pair of measurements $X=0$, $Y=0$ on one particle pair, a pair of measurements $X=1$, $Y=0$ was made on a second particle pair. How could this be possible if $\cos \theta_{X=1 Y=0}$ is irrational? The answer is clearly that the precise orientation associated with $Y=0$ for the first particle-pair measurement need not be identical with the precise orientation $Y=0$ associated with the second particle-pair measurement. Of course there is no reason why these orientations should be identical - the associated measurements would have been performed at different times and/or spatial locations. At the very least, the generic presence of uncontrollable gravitational waves would ensure the the two precise orientations associated with $Y=0$ for the two particle pairs will not be the same. That is to say, the presence of such irreducible background `noise' is consistent with the requirement in IST that for any settings $X$, $Y$ measuring an entangled particle pair, $\cos \theta_{XY}$ must satisfy the finiteness condition (\ref{finite}) and therefore be rational. (By the fact that Alice and Bob's measurements have finite precision, it is impossible for them to contrive to somehow violate the rationality of $\cos \theta_{XY}$ in practice.) In Fig \ref{F:CHSH}b) we label these two realisations of $Y=0$, with the second point labelled $Y'=0$. Although, for large enough $N$, $Y'=0$ can become arbitrarily close to $Y=0$ (in the Euclidean sense), the finiteness condition (\ref{finite}) ensures that the points can never be precisely the same. 

In an exact sense, $Y=0$ is the singular limit \cite{Berry} of $Y'=0$ as noise from gravitational waves and other effects are set precisely to zero \cite{Palmer:2018c} - suggesting a possible deep synergy between IST and gravitation theory (see Section \ref{discussion}). As such, we can look at the violation of the Bell inequality in a different way \cite{Palmer:2018c}. If we think of the Bell inequality as defined by the form (\ref{Sdet0}) that is satisfied by a conventional hidden-variable theory, then in IST the Bell inequality is neither satisfied nor violated, it is  undefined. The form (\ref{Sdet3}) that is violated is (in the $p$-adic sense, see below) distant from the form (\ref{Sdet0}). 

\section {Weakened Definitions of Free Choice and Local Causality} 
\label{free}
\subsection{Free Choice and Local Causality on the Invariant Set}
In this Section we modify the definitions of Free Choice and Local Causality so that they apply only to ontic states of our deterministic theories. Specifically:
\begin{itemize}
\item \textbf{Free Choice on the Invariant Set}: $p(\lambda | XY)=p(\lambda)$ for triples $(X, Y, \lambda)$ corresponding to states on $I_U$.
\item \textbf{Local Causality on the Invariant Set}: $A_{XY} (\lambda)=A_X(\lambda)$, $B=B_Y(\lambda)$ for triples $(X, Y, \lambda)$ corresponding to states on $I_U$.
\end{itemize}

Our conventional hidden-variable theory must still satisfy the Bell inequality even with these weaker definitions of Free Choice and Local Causality. The reason is straightforward. In such (classical) hidden-variable theories, there is no special ontological subset of state space; $I_U$ can be identified with the whole of state-space. From this perspective, all counterfactual states lie on this trivial invariant set. That is to say, the restriction of states to those lying on $I_U$ is no restriction at all and there is no difference between `Free Choice' and `Free Choice on the Invariant Set'. Similarly, there is no difference between `Local Causality' and `Local Causality on the Invariant Set'. 

However, such a restriction has significant implications in IST. According to IST, if $(\lambda, X, Y)$ is a triple corresponding to an (ontic) state on $I_U$, then so is $(\lambda, X^\dagger  , Y^\dagger )$, but neither $(\lambda, X^\dagger  , Y)$ nor $(\lambda, X, Y^\dagger )$ is a state on $I_U$. Hence, Free Choice on the Invariant Set is equivalent to the statement that $p(\lambda | XY)=p(\lambda| X^\dagger   Y^\dagger )$ for all choices $X$, $Y$ on $I_U$ which is true. The problematic situations $p(\lambda|X Y^\dagger )$ and $p(\lambda| X^\dagger   Y)$ which are equal to zero when $p(\lambda|XY)$ is nonzero, do not arise on $I_U$ by construction. In IST, $p(\lambda| XY)=p(\lambda | X^\dagger   Y^\dagger )=p(\lambda)$ on $I_U$ hence IST satisfies `Free Choice on the Invariant Set'. For similar reasons, IST satisfies `Local Causality on the Invariant Set'- the notion of an undefined value for $A$ or $B$ never occurs on $I_U$ by construction. Hence, the situations where $A_{XY}(\lambda) \ne A_{X Y^\dagger}(\lambda)$ never occur on $I_U$. That is to say, on $I_U$, $A$ is determined from $X$ and $\lambda$ only. 

Despite this, it might be thought that the finiteness condition (\ref{finite}) puts a constraint on the angles Alice could choose (for a given orientation of Bob's measuring device), contradicting the notion that she is a free agent. However, both measuring apparatus' have finite precision, denoted by the parameter $M_1$ (the larger is $M$ the more precise the orientations can be set). If $N=M_1+\Delta_1$ ($\Delta_1>0$), there is \emph{nothing} Alice can actually do to violate the finiteness conditions on $I_U$ (\ref{finite}), no matter how large is $M_1$ (i.e. no matter how precise are the measuring apparatuses). In this sense, for all practical purposes, there are no orientations that are not available to Alice when she orients her measuring device. For all practical purposes she is indeed a free agent. 

In a similar sense, one could perhaps imagine trying to subvert Free Choice on the Invariant Set by computing (in the real world) whether or not a putative value $A_{XY}(\lambda)$ has a defined value or not and choosing the settings $X$ and $Y$ where $A$ is undefined. If this was a computational problem, IST would be inconsistent with Free Choice on the Invariant Set. However, it is not a computational problem if $N=M_2+\Delta_2$ where now $M_2$ represents the power of some maximal (but necessarily finite) computational system on $I_U$. This means that estimating whether a putative state lies on $I_U$ or not is effectively uncomputable (computationally irreducible \cite{Wolfram} to be precise, see also \cite{Blum} \cite{Dube:1993}). That is, there is no computation on $I_U$ which can output the value $1$ when $A_{XY}(\lambda)$ is associated with a state on $I_U$ and output the value $0$ when $A_{XY}(\lambda)$ is not associated with a state on $I_U$ and is therefore undefined. Because of this, IST is not computationally inconsistent with Local Causality on the Invariant Set. 

\subsection{Are These Weakened Definitions Physically Reasonable?}

Are `Free Choice on the Invariant Set' and `Local Causality on the Invariant Set' reasonable definitions of free choice and local causality as we understand these matters from a physical perspective? Consider the notion of human free will more generally. This is frequently described as an ability to have done otherwise, something we feel rather viscerally to be true in many situations. By assuming we could have done otherwise, we admit a theoretical framework where certain degrees of freedom can be perturbed, keeping all others fixed. For example, by saying I could have turned left at the junction, having actually turned right, I am implicitly assuming a world $U'$ identical in all respects to the real one $W$ except that I turned left, such that $W'$ is consistent with the laws of physics just as $W$ is. However, why should $W'$ be consistent? For example, is there necessarily a cosmic initial state which, given the laws of physics, will produce such a $W'$? And if $W'$ is not consistent, should we therefore conclude that we do not have free will? For many people this would be a ludicrous conclusion to draw, because for them the notion of free will merely describes an absence of constraints preventing them from doing what they want to do, e.g. turning right. Of course there are situations where they may not be free to do what they want (paying taxes, dying prematurely) but that does not mean there aren't many situations where they are free to do what they want. If we define free will in this latter sense, then we can talk meaningfully about free will without invoking the notion of counterfactual definiteness and hence we can talk meaningfully about free will when counterfactual definiteness is only partially true, as here. In this sense, the counterfactual definition of free will is stronger than the one based on absence of constraints; the former implies the latter, but not the converse. In this sense, the weakened definition of Free Choice on the Invariant Set does not restrict the notion of free choice meant as an absence of constraints preventing Alice and Bob from performing the measurements they want to perform (i.e. defined as a constraint on events in space-time). In particular, in IST, if Alice wants to choose $X=0$ she can so do. If she wants to choose $X=1$, she can so do. Similarly for Bob. As such, Free Choice on the Invariant Set is a physically reasonable weakening of the overly-strong definition of Free Choice. 

The notion of causality can be analysed similarly. If we clap our hands and hear the echo from a nearby wall a fraction of a second later, we can say that the echo was caused by the clap in the sense that we believe that if we had not clapped on an occasion when we did clap, no sound would have been heard. But again, if the laws of physics do not always permit such counterfactual worlds, does this invalidate the notion of causality? Certainly not, because we can instead solve the relevant equations (in space-time) to infer that the clap is the source of a propagating acoustic oscillation, reflected from the nearby wall and transmitted to our ear where it excites a neuronal response in our brains. It is the latter idea, not the former, that is the principal determinant of the notion of causality in physics. Indeed in relativity theory causality is built into the very metric fabric of space-time - it has nothing to do with the definiteness or otherwise of counterfactuals in state space. Again, defining causality which (implicitly or otherwise as in the case of IST) depends on counterfactual situations, is too strong a constraint when developing models of quantum physics which are consistent with relativistic space-time. 

\section{Possible Objections to Invariant Set Theory}
\label{objections}

\subsection{Conspiracy?}

Putative deterministic accounts of the violation of the Bell Inequality are generally seen as implausibly conspiratorial (e.g. \cite{WisemanCavalcanti}). A classic example illustrating such a possibility is where the values for $X$ and $Y$ are determined by the wavelength of light from distant widely separated quasars. It seems utterly implausible to imagine that there could be some correlation between the hidden variables associated with some entangled particle pair emitted by a local laboratory source, and the wavelength of light from such astronomical sources. 

However, this supposes that the hidden variables are themselves localisable (e.g. within the particles in question) and ignores the fact that some decision was made to use light from two quasars as the determinants of $X$ and $Y$ respectively, rather than, say, bits from the works of Shakespeare and Goethe (or a million other whimsical determinants). From where does the information come that determines the decision to use quasar light?

Let us explore these two issues in turn. Consider the following analogy. When a baby is born, its sex can be classified as male or female, and can be determined from information internal to the baby (e.g. its DNA). By contrast, the gender of the baby is not determined in such a biological way. Babies can be said to belong to one of two sets: those who by the age of (say) 40 will have identified as having either male or female gender.  In a deterministic (relativistic) theory of the universe, the information needed to determine to which of these sets a baby belongs may not exist internally, but nevertheless will exist on the intersection of a spacelike hypersurface containing the birth event, with the past light cone of the 40th birthday event. In this sense, the information which determines to which gender set the baby belongs may not be localisable within the baby, but will still exist within the relevant light cone. 

In this paper, the relevant space-time event is not one of gender identification, but a quantum spin measurement. Nevertheless, the same basic conclusion applies: in a deterministic and relativistic world, we should certainly not assume that the supplementary information contained in $\lambda$ is somehow localised within the particle; it will instead by delocalised in the intersection $\mathcal S$ of a spacelike hypersurface containing the particle's moment of creation, with the past light cone of the measurement event. Note, incidentally, that nowhere in the discussion so far have we specified $\lambda$ other than it is supplemental to discrete measurement settings in determining some measurement outcome. That leaves $\lambda$ open to quite broad interpretation. 

Returning to the second point, one can ask where on $\mathcal S$ is located the information which determines that the experimenters will choose the wavelength of quasars rather than bits from the works of Shakespeare or Goethe. Like $\lambda$, the information is not localisable on $\mathcal S$ but is spread across $S$. Indeed, following ideas in nonlinear systems theory \cite{Ott}, for example on riddled basins of attraction, one can expect that the two types of information on $\mathcal S$ ($\lambda$ on the one hand, choosing between quasars or books on the other) are intertwined in the sense that a small perturbation in any localised region of $\mathcal S$ will impact both $\lambda$ and the decision on which source of information to use. Of course the wavelength of light and the bits of Shakespeare are themselves localised information, but everything else that determines whether it is to be wavelength of light, bits of Shakespeare (or a million other whimsical factors) is delocalised. 

A manifestation of the profoundly intertwined nature of these two sources of information is made explicit when considering the counterfactual question: what would the experimenter have observed had she chosen bits of Shakespeare on an occasion when she in fact chose the wavelength of quasar light. This is precisely the type of counterfactual that is disallowed by IST (when these choices determine quantum measurement settings). In particular, a perturbation to the world which changes the decision from quasar bits to Shakespeare bits, keeping the particles to be measured unchanged, will take a state of the world off $I_U$ to a non-ontic state. As discussed in Section \ref{free}, a denial of counterfactual definiteness in this way does not invalidate the notion of free will in the least, since the decision to use the wavelength of photons was made without the experimenter feeling any constraint to do other than what they actually did. 

These issues do, nevertheless, indicate a holistic nature to the world, similar to when it was concluded that respecting the finiteness condition (\ref{finite}) in the presence of gravitational waves suggests some synergy between quantum physics and gravitation theory. However, holism should not be seen as tantamount to either implausible conspiracy or nonlocality (in the sense of a negation of local causality). Rather, these matters are consistent with this notion that ontic states lie on and hence are constrained to a nontrivial fractal geometry in state space. 

\subsection{Fine Tuning?}
\label{fine}

Using the Euclidean metric of Hilbert Space, two Hilbert states, one satisfying (\ref{finite}) and one not, can be arbitrarily close for large enough $N$. If we only associate ontic states with finite states, then a theory based on finite Hilbert states (as defined in (\ref{finite})) would appear to be fine tuned \cite{WoodSpekkens}. This appears unacceptable as we require theories of physics to be robust to small amounts of noise (which should therefore map ontic states to ontic states). However, as discussed, there is a deterministic underpinning to IST based on the primacy of a  fractal state-space geometry $I_U$. The primacy of fractal geometry in particular suggests that the Euclidean metric might not be the physically appropriate way to measure distances in state space. Since $p$-adic numbers are to fractals as real numbers are to Euclidean geometry (the set of $p$-adic integers is homeomorphic to Cantor sets with $p$ iterated pieces \cite{Katok}), then the $p$-adic metric (with $p=N$) is more respectful of the primacy of $I_U$ than is the traditional Euclidean metric (e.g. of Hilbert Space). In particular, non-ontic states which do not lie on $I_U$ cannot in the $p$-adic metric be close to ontic states which do lie on $I_U$ (even though from a Euclidean perspective such states may appear arbitrarily close). This is a conceptually important conclusion for the metaphysics of counterfactuality. In terms of the corresponding $p$-adic norm, it is impossible for small-amplitude noise to perturb an ontic state on $I_U$ to a non-ontic state off $I_U$. Hence, from the more physical $p$-adic perspective, IST is not fine tuned at all.  The metrics of classical and quantum theory (whose state spaces are both continuum spaces) arise at the singular limit $p=\infty$ of the $p$-adic metric. These matters are discussed in more detail in \cite{Palmer:2018a}. 

\section{Discussion}
\label{discussion}

Bell's theorem is the epitome of why quantum physics is perceived as being incomprehensible (a notion which has percolated into public consciousness). The theorem appears to imply that any theory in which two seemingly reasonable assumptions hold - one about the freedom of experimenters to choose measurement settings without having their minds corrupted by the value of particle hidden variables, and the other that outcomes cannot depend on events which occur arbitrarily far away and hence outside the light-cone - must necessarily be inconsistent with experiment. However, of course, the theorem actually depends critically on the precise mathematical definitions of Free Choice and Local Causality as they are known in the literature. A principal conclusion of this paper is that these definitions are too strong and lead to theories which would obey free choice and local causality in any practical sense of the word being rejected. That is to say, it may be that in our search to understand the Bell Theorem, we are throwing the baby out with the bathwater. 

In this paper we have described a finite model of quantum physics (invariant set theory - IST - \cite{Palmer:2018a} and have shown that certain counterfactual states necessarily lie in gaps in a fractal state-space geometry $I_U$, assumed primal in IST's theoretical framework. Precisely because they lie in the gaps, such types of counterfactual states cannot be ontic. As discussed in \cite{Palmer:2018a}, for quantum physical experiments (like quantum interferometry, the sequential Stern-Gerlach experiment and GHZ) these specific types of non-ontic counterfactual states are generic in quantum physics. Here we have shown that they also arise in the interpretation of the Bell Theorem. In particular, they necessarily arise in the definitions of Free Choice and Local Causality. It is concluded that these definitions are indeed too strong and rule out theories which have free choice and local causality in any practical operational sense - a notion which is defined precisely as `Free Choice on the Invariant Set' and `Local Causality on the Invariant Set'. As such, we claim that the Bell Theorem does not rule out deterministic theories which exhibit free choice and local causality in any physically meaningful sense. For very similar reasons we conclude that IST is not superdeterminstic in the sense of relying on implausible conspiratorial correlations. Neither is it fine-tuned (because it is underpinned by a $p$-adic metric which necessarily respects the primacy of a fractal state-space geometry, implying that non-ontic states can never be close to ontic states). 

The principal motivation for this paper has not been to make sense of the Bell Theorem \emph{per se}. Rather, it has been to formulate a theory of quantum physics which is more obviously compatible than is quantum theory with the deterministic nonlinear causal theory of general relativity. The failure to synthesise quantum and gravitational physics convincingly has been one of the great failings of theoretical physics over the last half century. The results in this paper suggest that with a suitably formulated alternative to quantum theory, there is no longer any incompatibility with relativistic causality, in spirit even if not in practice \cite{Penrose:1989}.  As such, the required synthesis, instead of requiring any fundamental revision of our concepts of causality, may require an extension of the primality of geometry, the basis of general relativity, from space-time to (cosmological) state space. The type of holistic primal geometry $I_U$ discussed in this paper provides an example of the notion of a `top-down' influence \cite{Ellis} where some geometric constraint on the state space of the whole universe determines the properties of individual particles. This is in complete contrast to the more usual bottom-up philosophy of reductionism. The author will continue to pursue a programme of research applying such holistic top-down concepts to some of the problems at the cutting edge of fundamental physics.

\section*{Acknowledgements}
The author is very grateful for discussions with Roger Colbeck, Oliver Reardon-Smith and Tony Sudbery following a seminar at the University of York. These led the author to realise that one needs to be quite explicit about how definitions of free choice and local causality differ from those conventionally used, if one is  claiming a locally realistic theory which violates the Bell inequality.  It was these discussions that provided the motivation to write this paper. The author is also grateful to Felix Tennie for helpful comments on an early draft of the paper. 

\section*{Appendix A}
A finite theory generates bit strings $S$ whose maximal information content is finite. 
\begin{definition}
Any putative finite theory of physics is irreducibly indeterministic if such a maximal $S$ cannot be generated deterministically. 
\end{definition}
\begin{theorem}
A putative finite theory of physics cannot be irreducibly indeterministic. 
\end{theorem}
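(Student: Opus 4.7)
The plan is to argue essentially tautologically from the finiteness hypothesis, viewing the definition as almost self-refuting once unpacked. First I would fix notation: by hypothesis the theory produces only bit strings of bounded information content, and we focus on a bit string $S$ achieving the finite maximum $K$ of that content. The target is to exhibit a deterministic procedure whose output is $S$, thereby negating the premise of the defining clause of irreducible indeterminism.

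The main step is the elementary observation from algorithmic information theory that every finite binary word is trivially computable. Concretely, the lookup-table Turing machine whose transition rule simply prints the string $S$ is a deterministic generator of $S$, and its description length is bounded by $K+O(1)$. Since this description itself has finite information content, it is a legitimate object within the ambit of a finite theory of physics, and not a smuggled-in infinite resource. Hence a deterministic generator of $S$ exists, contradicting the assumed irreducible indeterminism.

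I would then add a complementary remark to forestall the objection that this hard-coding is too cheap: even if one models the theory's output as a probabilistic process over a finite sample space $\Omega$ with $|\Omega| \le 2^K$, one can replace the stochastic draw by a deterministic function of a finite auxiliary seed indexing $\Omega$. The seed is itself a finite piece of data, so the apparent indeterminism is reducible to deterministic selection combined with a hidden finite parameter. Only an infinite, uncompressible source of randomness could count as truly irreducible, and such a source is by hypothesis unavailable in a finite theory.

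The main obstacle I anticipate is not technical but definitional: the force of the theorem depends on reading \emph{generated deterministically} broadly enough that any algorithm producing $S$ qualifies, yet narrowly enough that the conclusion is substantive for physics. I would therefore be explicit that the proof relies on the existence of \emph{some} deterministic description, interpreting the contrapositive as: if no deterministic description were to exist, then the information content of the generating process would have to exceed any finite bound, contradicting the theory's finiteness. This framing links the formal statement directly to the motivating slogan in the body of the paper that genuine indeterminism requires an infinite informational substrate.
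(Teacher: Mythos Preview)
Your proposal is correct and follows essentially the same strategy as the paper: exhibit a deterministic procedure that reproduces the finite maximal string $S$, thereby contradicting the definition of irreducible indeterminism. The paper's version is more concrete---it encodes $S$ as the rational $r_0=0.a_1a_2\ldots a_N$ in binary and extracts the bits via the shift map $a_n=\lfloor 2r_{n-1}\rfloor$, $r_n=2r_{n-1}-\lfloor 2r_{n-1}\rfloor$---whereas you invoke a hard-coded lookup-table Turing machine; both amount to storing $S$ in finite initial data and replaying it, so the arguments coincide in substance.
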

\begin{proof}
Let 
\begin{equation} 
S=(a_1, a_2 \ldots a_N)
\end{equation}
denote a maximal bit string where the brackets denote the possibility of periodicity (like the digits of the rational number 1/7). By construction, this specific bit string \emph{can} be generated by the deterministic procedure
$$
a_n=\lfloor 2r_{n-1} \rfloor
$$
where
$$
r_n=2r_{n-1}-\lfloor 2 r_{n-1} \rfloor
$$
where $0 < n \le N$. Here $\lfloor \ldots \rfloor$ denotes the integer floor function, and $r_0$ is the rational number whose base-2 expansion is $0.a_1a_2a_3 \ldots a_N$.\end{proof}
\section*{Appendix B}
Here we show that in IST the normalised sum of two finite Hilbert States is generically not finite. Let us add the two (normalised) finite Hilbert States
\begin{align}
|\psi_1\rangle=\frac{1}{\sqrt 2} (|0\rangle + e^{i \phi_1} |1\rangle \nonumber \\
|\psi_2\rangle=\frac{1}{\sqrt 2} (|0\rangle + e^{i \phi_2} |1\rangle
\end{align}
where $\phi_1$ and $\phi_2$ satisfy the finite condition (\ref{finite}) and are therefore both rational. A simple calculation gives
\begin{align}
\label{superposed}
|\psi_1\rangle +|\psi_2 \rangle = A (\cos \frac{\phi_3}{2} |0\rangle + \sin \frac{\phi_3}{2} e^{i \phi_4} |1\rangle)
\end{align} 
where $A$ is a normalisation factor, and
\begin{align}
\cos^2\frac{\phi_3}{2}&=\frac{2}{3+\cos(\phi_1-\phi_2)}  \nonumber \\
\phi_4&= \frac{\phi_1+\phi_2}{2}
\end{align}
The interesting equation is the first one. By Niven's theorem described in Appendix C, if $\theta/2 \pi$ is rational, then generically $\cos \theta$ is not. By construction $\phi_1-\phi_2$ is rational. Hence $\cos^2 \phi_3/2$ cannot be rational. This means the normalised Hilbert state
$$
\cos \frac{\phi_3}{2} |0\rangle + \sin \frac{\phi_3}{2} e^{i \phi_4} |1\rangle
$$
cannot satisfy (\ref{finite}), no matter how large is $N$. This implies that superposed Hilbert states such as (\ref{superposed}) have no ontic status in IST. Essentially, the wavelike properties of quantum physics arise from the helical nature of trajectories on $I_U$, on which arithmetic closure is achieved using $p$-adic algebra \cite{Palmer:2018a}. 

\section*{Appendix C}
\begin{theorem}
 Let $\phi/\pi \in \mathbb{Q}$. Then $\cos \phi \notin \mathbb{Q}$ except when $\cos \phi =0, \pm \frac{1}{2}, \pm 1$. \cite{Niven, Jahnel:2005}
\end{theorem}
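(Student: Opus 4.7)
The plan is to prove Niven's theorem via a Chebyshev-type recurrence together with the rational root theorem. The crucial observation, which drives the whole argument, is that one should study $2\cos\phi$ rather than $\cos\phi$ itself, because the doubling produces a monic recursion with integer coefficients.

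First, I would define the sequence $a_n = 2\cos(n\phi)$. The product-to-sum identity $2\cos(n\phi)\cos\phi = \cos((n+1)\phi) + \cos((n-1)\phi)$ yields the recurrence
\begin{equation}
a_{n+1} = a_1\, a_n - a_{n-1}, \qquad a_0 = 2, \quad a_1 = 2\cos\phi.
\end{equation}
A routine induction on $n$ then shows that $a_n = P_n(a_1)$ for a \emph{monic} polynomial $P_n \in \mathbb{Z}[x]$ of degree $n$: the base cases are immediate, and in the inductive step $P_{n+1}(x) = x\, P_n(x) - P_{n-1}(x)$ preserves monicity, integrality and the degree count.

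Next, suppose $\phi/\pi = p/q$ with $p, q \in \mathbb{Z}$ and $q > 0$. Then $a_q = 2\cos(p\pi) = \pm 2$, so $2\cos\phi$ is a root of the monic integer polynomial $P_q(x) \mp 2$. By the rational root theorem, every rational root of a monic polynomial in $\mathbb{Z}[x]$ is itself a (rational) integer. Hence if $\cos\phi \in \mathbb{Q}$ then $2\cos\phi \in \mathbb{Z}$, and since $|2\cos\phi| \leq 2$ the only possibilities are $2\cos\phi \in \{-2,-1,0,1,2\}$, i.e.\ $\cos\phi \in \{-1, -\tfrac{1}{2}, 0, \tfrac{1}{2}, 1\}$, as claimed.

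I do not expect any genuine obstacle, but the one conceptual point worth highlighting is precisely the factor of $2$ in the definition of $a_n$. Without it, the recurrence for $\cos(n\phi)$ in terms of $\cos\phi$ introduces a leading coefficient $2^{n-1}$, and the rational root theorem in that form would only force $\cos\phi$ to have a denominator dividing a power of $2$; ruling out values such as $\pm 1/4, \pm 1/8, \ldots$ would then need extra work. The rescaling to $2\cos\phi$ sidesteps this entirely and reduces the theorem to a one-line application of the rational root theorem once the polynomial structure of $P_q$ is established.
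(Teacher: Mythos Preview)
Your argument is correct. Both your proof and the paper's hinge on the same initial insight---work with $2\cos\phi$ so as to obtain a monic integer recursion---but the two proofs then diverge. You build the full Chebyshev-type recurrence $a_{n+1}=a_1a_n-a_{n-1}$, extract a monic polynomial $P_q\in\mathbb{Z}[x]$ with $P_q(2\cos\phi)=\pm 2$, and finish with the rational root theorem; in effect you are showing that $2\cos\phi$ is an algebraic integer and hence, if rational, an ordinary integer. The paper instead iterates only the angle-doubling identity $2\cos 2\phi=(2\cos\phi)^2-2$: writing $2\cos\phi=a/b$ in lowest terms, it checks that $a^2-2b^2$ and $b^2$ are again coprime, so the denominators along the orbit $2\cos\phi,\,2\cos 2\phi,\,2\cos 4\phi,\ldots$ grow without bound unless $b=\pm 1$; this contradicts the fact that for $\phi/\pi\in\mathbb{Q}$ the doubling orbit takes only finitely many values. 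Your route is a clean algebraic-integer argument that generalises readily (and actually proves a little more, namely the integrality of $2\cos\phi$ regardless of rationality); the paper's route is a more bare-hands denominator-growth/periodicity contradiction that avoids invoking the rational root theorem altogether.
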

\begin{proof} 
Assume that $2\cos \phi = a/b$ where $a, b \in \mathbb{Z}, b \ne 0$ have no common factors.  Since $2\cos 2\phi = (2 \cos \phi)^2-2$, then $2\cos 2\phi = (a^2-2b^2)/b^2$. Now $a^2-2b^2$ and $b^2$ have no common factors, since if $p$ were a prime number dividing both, then $p|b^2 \implies p|b$ and $p|(a^2-2b^2) \implies p|a$, a contradiction. Hence if $b \ne \pm1$, then the denominators in $2 \cos \phi, 2 \cos 2\phi, 2 \cos 4\phi, 2 \cos 8\phi \dots$ get bigger without limit. On the other hand, if $\phi/\pi=m/n$ where $m, n \in \mathbb{Z}$ have no common factors, then the sequence $(2\cos 2^k \phi)_{k \in \mathbb{N}}$ admits at most $n$ values. Hence we have a contradiction. Hence $b=\pm 1$ and $\cos \phi =0, \pm\frac{1}{2}, \pm1$. 
\end{proof}
\bigskip
\bibliography{mybibliography}

\begin{thebibliography}{10}

\bibitem{Bell}
J.S. Bell.
\newblock {\em Speakable and unspeakable in quantum mechanics}.
\newblock Cambridge University Press, 1993.

\bibitem{Berry}
M~Berry.
\newblock Singular limits.
\newblock {\em Physics Today}, 55:10--11, 2002.

\bibitem{Blum}
L.~Blum, F.Cucker, M.Shub, and S.Smale.
\newblock {\em Complexity and Real Computation}.
\newblock Springer, 1997.

\bibitem{Brunner}
N.~Brunner, D.~Cavalcanti, S.~Pironio, V.~Scarani, and S.~Wehner.
\newblock Bell nonlocality.
\newblock {\em Rev. Mod. Phys.}, 86:419--478, 2014.

\bibitem{Diosi:1989}
L.~Di\'{o}si.
\newblock Models for universal reduction of macroscopic quantum fluctuations.
\newblock {\em Phys. Rev.}, A40:1165--74, 1989.

\bibitem{Dube:1993}
S.~Dube.
\newblock Undecidable problems in fractal geometry.
\newblock {\em Complex Systems}, 7:423--444, 1993.

\bibitem{Ellis}
G.F.R. Ellis.
\newblock Top-down causation and quantum physics.
\newblock {\em Proceedings of the National Academy of Sciences},
  115:11661=11663, 2018.

\bibitem{Ellis:2018}
G.F.R. Ellis, K.A.Meissner, and H.Nicolai.
\newblock The physics of infinity.
\newblock {\em Nature}, 14:770--772, 2018.

\bibitem{Hardy:2004}
L.~Hardy.
\newblock Why is nature described by quantum theory?
\newblock In Science and Ultimate Reality: Quantum Theory, Cosmology and
  Complexity. Eds J.D Barrow, P.C.W. Davies and C.L. Harper Jr. Cambridge
  University Press, 2004.

\bibitem{Hilbert}
D.~Hilbert.
\newblock {\em David Hilbert's lectures on the foundations of arithmetics and
  logic 1917-1933 (eds Ewald, W. and Sieg, W.}
\newblock Springer, Heidelberg, 2013.

\bibitem{Jahnel:2005}
J.~Jahnel.
\newblock When does the (co)-sine of a rational angle give a rational number?
\newblock arXiv:1006.2938, 2010.

\bibitem{Katok}
S.~Katok.
\newblock {\em p-adic Analysis compared with Real}.
\newblock American Mathematical Society, 2007.

\bibitem{Niven}
I.~Niven.
\newblock {\em Irrational Numbers}.
\newblock The Mathematical Association of America, 1956.

\bibitem{Ott}
E.~Ott.
\newblock {\em Chaos in Dynamical Systems}.
\newblock Cambridge University Press, 2002.

\bibitem{Palmer:2018c}
T.N. Palmer.
\newblock Experimental non-violation of the {B}ell {I}nequality.
\newblock {\em Entropy}, 20:356, 2018.

\bibitem{Palmer:2018a}
T.N. Palmer.
\newblock A finite theory of qubit physics.
\newblock arXiv:1804.01734, 2018.

\bibitem{Penrose:1989}
R.~Penrose.
\newblock {\em The Emperor's New Mind: Concerning Computers, Minds, and the
  Laws of Physics}.
\newblock Oxford University Press, 1989.

\bibitem{Penrose:2004}
R.~Penrose.
\newblock {\em The Road to Reality: A Complete Guide to the Laws of the
  Universe}.
\newblock Jonathan Cape, London, 2004.

\bibitem{Shalm}
A.A. Shalm and 33~Others.
\newblock Strong loophole-free test of local realism.
\newblock {\em Physical Review Letters}, 115:250402, 2015.

\bibitem{tHooft:2015b}
G.~'tHooft.
\newblock {\em The Cellular Automaton Interpretation of Quantum Mechanics}.
\newblock Springer, 2016.

\bibitem{WisemanCavalcanti}
H.M. Wiseman and E.G.Cavalcanti.
\newblock Causarum investigatio and the two {B}ell's theorems of {J}ohn {B}ell.
\newblock arXiv:1503.06413, 2015.

\bibitem{Wolfram}
S.~Wolfram.
\newblock {\em A New Kind of Science}.
\newblock Wolfram Media, 2002.

\bibitem{WoodSpekkens}
C.J. Wood and R.W.Spekkens.
\newblock The lesson of causal discovery algorithms for quantum correlations:
  Causal explanations of bell-inequality violations require fine tuning.
\newblock {\em New J. Phys.}, 17:033002, 2015.

\end{thebibliography}
\end{document}